\newtheorem{lemma}{Lemma}
\newtheorem{theorem}{Theorem}
\date{April 22, 2010}
\begin{document}
\newcommand{\2}{\vspace{0.3cm} }
%\title{Betweenness Above Tight Lower Bound\\ is Fixed-Parameter Tractable\thanks{Part of this research has been supported by the EPSRC, grant XXX, and the Netherlands Organisation for Scientific Research (NWO), grant 639.033.403.}}

\title{Betweenness Parameterized Above Tight Lower Bound\thanks{Part of this research has been supported by the EPSRC, grant EP/E034985/1, the IST Programme of the European Community, under the
PASCAL 2 Network of Excellence, and the Netherlands Organisation for Scientific Research (NWO), grant 639.033.403.}}

\author{
Gregory Gutin${}^1$, Eun Jung Kim${}^1$, Matthias Mnich${}^2$, and Anders Yeo${}^1$\\[6pt]
\small ${}^1$  Department of Computer Science\\[-3pt]
\small  Royal Holloway, University of London\\[-3pt]
\small Egham, Surrey TW20 0EX, UK\\[-3pt]
\small \texttt{\{gutin|eunjung|anders\}@cs.rhul.ac.uk}\\
\small ${}^2$ Technische Universiteit Eindhoven,
\small Eindhoven, The Netherlands\\[-3pt]
\small \texttt{m.mnich@tue.nl}
}

%\author{Gregory Gutin\inst{1}, Eun Jung Kim\inst{1}, Matthias Mnich\inst{2}, Anders Yeo\inst{1}}

%\authorrunning{Gutin, Kim, Mnich, Yeo}

%\institute{Royal Holloway University of London, Department of Computer Science,\\ Egham, Surrey TW20 0EX, United Kingdom,\\ \email{\{gutin,eunjung,anders\}@cs.rhul.ac.uk}
%\and
%%Technische Universiteit Eindhoven, Faculteit Wiskunde en Informatica,\\Postbus 513, 5600 MB Eindhoven, The Netherlands,\\ \email{m.mnich@tue.nl}
%Technische Universiteit Eindhoven, Eindhoven, The Netherlands,\\ \email{m.mnich@tue.nl}
%}

\maketitle

\begin{abstract}
We study ordinal embedding relaxations in the realm of parameterized complexity.
We prove the existence of a quadratic kernel for the {\sc Betweenness} problem parameterized above its tight lower bound, which is stated as follows.
For a set $V$ of variables and set $\mathcal C$ of constraints ``$v_i$~\mbox{is between}~$v_j$~\mbox{and}~$v_k$'', decide whether there is a bijection from $V$ to the set $\{1,\ldots,|V|\}$ satisfying at least $|\mathcal C|/3 + \kappa$ of the constraints in $\mathcal C$.
Our result solves an open problem attributed to Benny Chor
in Niedermeier's monograph ``Invitation to Fixed-Parameter Algorithms.''
The betweenness problem is of interest in molecular biology. An approach developed in this paper can be used to determine parameterized complexity of a number of other optimization problems on permutations parameterized above or below tight bounds.
\end{abstract}

%\maketitle

%\newpage

\section{Introduction}
The problem of mapping points with measured pairwise distances into a target metric space has a long history and has been studied extensively from multiple perspectives due to its numerous applications.
The quality of such an embedding can be measured with various objectives; for example isometric embeddings preserve all distances while aiming at low-dimensional target spaces.
Yet, for many contexts in nearest-neighbor search, visualization, clustering and compression it is the order of distances rather than the distances themselves that captures the relevant information.
The study of such \emph{ordinal embeddings} dates back to the 1950's and has recently witnessed a surge in interest~\cite{AlonEtAl2008,BadoiuEtAl2008,BiluLinial2005,karpinski-2009}.
In an ordinal embedding the relative order between pairs of distances must be preserved as much as possible, i.e.,
one minimizes the relaxation of an ordinal embedding defined as the maximum ratio between two distances whose relative order is inverted by the embedding.

Here we study the one-dimensional ordinal embedding of partial orders that specify the maximum edge for some triangles.
This problem has been studied under the name {\sc Betweenness}~(see Part A12 of \cite{GareyJohnson1979}), which takes a set $V$ of variables and a set $\mathcal C$ of \emph{betweenness constraints} of the form $``v_i~\mbox{is between}~v_j~\mbox{and}~v_k$'' for distinct variables $v_i,v_j,v_k\in V$.
Such a constraint will be written as $(v_i,\{v_j,v_k\})$.
The objective is to find a bijection $\alpha$ from $V$ to the set $\{1,\hdots,|V|\}$ that ``satisfies'' the maximum number of constraints from $\mathcal C$, where a constraint $(v_i,\{v_j,v_k\})$ is \emph{satisfied} by~$\alpha$ if either $\alpha(v_j) < \alpha(v_i) < \alpha(v_k)$ or $\alpha(v_k) < \alpha(v_i) < \alpha(v_j)$ holds.
We also refer to~$\alpha$ as a \emph{linear arrangement} of~$V$.

Such linear arrangements are of significant interest in molecular biology, where for example markers on a chromosome need to be totally ordered as to satisfy the maximum number of constraints~\cite{CoxEtAl1990,GossHarris1975}.
More theoretical interest comes from the constraint programming framework with unbounded domains and interval graph recognition~\cite{KrokhinEtAl2004}.

Despite its simple formulation, the {\sc Betweenness} problem has a challenging computational complexity.
Already deciding if \emph{all} constraints can be satisfied by some linear arrangement is an $\mathsf{NP}$-complete problem \cite{Opatrny1979}.
Hence, the maximization problem is $\mathsf{NP}$-hard.
On the other hand, on the other hand, the average number of constraints satisfied by a uniformly random permutation of the variables is one-third of all constraints, and this fraction is tight.
Better approximation ratios are hard to achieve: the fraction of one-third is best-possible under the Unique Games Conjecture~\cite{CharikarEtAl2009}, and it is $\mathsf{NP}$-hard to find a linear arrangement that satisfies a $1-\epsilon$ fraction of the constraints for any $\epsilon \in (0,1/48)$~\cite{ChorSudan1998}.
The mere positive result is a polynomial-time algorithm that, assuming that some linear arrangement satisfies all constraints, finds a linear arrangement satisfying
at least half of them \cite{ChorSudan1998,Makarychev2009}.

A \emph{parameterized problem} is a subset $L\subseteq \Sigma^* \times
\mathbb{N}$ over a finite alphabet $\Sigma$. $L$ is
\emph{fixed-parameter tractable} if the membership of an instance
$(x,\kappa)$ in $\Sigma^* \times \mathbb{N}$ can be decided in time
$|x|^{O(1)} \cdot f(\kappa)$ where $f$ is a computable function of the
parameter $\kappa$~\cite{DowneyFellows1999,FlumGrohe2006,Niedermeier2006}.
If $L$ is $\mathsf{NP}$-hard, then the function $f(\kappa)$ must be superpolynomial
provided $\mathsf{P}\neq \mathsf{NP}$. Often $f(\kappa)$ is ``moderately exponential,''
which makes the problem practically feasible for small values of~$\kappa$.
Thus, it is important to parameterize a problem in such a way that the
instances with small values of $\kappa$ are of real interest.

One can parameterize the {\sc Betweenness} problem in the standard way by asking whether there exists a linear arrangement that satisfies at least $\kappa$ of the constraints, where $\kappa$ is the parameter.
This parameterized problem is trivially fixed-parameter tractable for the following simple reason.
A uniformly random permutation of the variables in $V$ satisfies one-third of the constraints in expectation; thus if $\kappa\leq |\mathcal C|/3$ then the answer is ``yes'' whereas if $\kappa > |\mathcal C|/3$ then the instance size is bounded by a function of $\kappa$ and we can solve the problem by brute-force.
Note that the standard parameterization is of little value as the parameter $\kappa$ will often be large.
Thus, it makes sense to consider the following natural parameterization above a tight lower bound of the {\sc Betweenness} problem~\cite{Niedermeier2006}.

Observe that $|\mathcal C|/3$ is a  lower bound on any optimal solution.
On the other hand, for a set $\mathcal C$ of constraints containing all three possible constraints on each 3-set of variables, no more than $|\mathcal C|/3$ of the constraints in $\mathcal C$ can be satisfied in any linear arrangement.
Hence the lower bound of one-third on the fraction of satisfiable constraints is tight, in the sense that it is attained by an infinite family of instances.
So the right question  to ask  is whether there exists a linear arrangement that satisfies at least $|\mathcal C|/3 + \kappa$ of the constraints.
The parameterized complexity of this problem attributed to Benny Chor is open, and was stated as such by Niedermeier \cite{Niedermeier2006}.
Since the {\sc Betweenness} problem is $\mathsf{NP}$-complete, the complementary question of whether all but $\kappa$ constraints are satisfiable by some linear arrangement is not fixed-parameter tractable, unless $\mathsf{P} = \mathsf{NP}$.
For the special case of a \emph{dense} set of constraints, containing a constraint for each 3-subset of variables, subexponential fixed-parameter algorithms were recently obtained~\cite{AlonEtAl2010,karpinski-2009}.

Mahajan and Raman \cite{MahajanRaman1999} were the first to consider
problems parametrized above tight lower bounds (PATLB). They indicated that such parameterizations are often the
only ones of practical value. Mahajan et al. \cite{MahajanEtAl2009} proved several
results for problems PATLB, and noted that only a few such problems were
investigated in the literature (partially, because they are often highly nontrivial to
study) and stated several open questions on fixed-parameter tractability of such
problems. Until very recently there were only three other papers on problems parameterized above tight lower bounds: Gutin et al. \cite{GutinEtAl2007}, Gutin et al. \cite{GutinEtAl2008}, and Villanger et al. \cite{Vill}.

Two recent papers by Alon et al. \cite{AlonEtAl2010} and Gutin et al. \cite{GutinEtAl2009}
solved two open questions in \cite{MahajanEtAl2009}, but several others remain open. A prominent example is to decide whether a planar graph $G$ contains an independent set of size $|V(G)|/4 + \kappa$; it is unknown whether this problem is fixed-parameter tractable or not. Another important problem is {\sc Max Lin-2} PATLB; the parameterized complexity of this problem remains open despite remains open despite extensive efforts \cite{CrowstonIPL,CrowstonSWAT,GutinEtAl2009}.

In this paper we settle Benny Chor's question~\cite[p. 43]{Niedermeier2006} about the parameterized complexity of the following problem:

\begin{quote}
  \textsc{Betweenness Above Tight Lower Bound} (\textsc{BATLB})\nopagebreak

  \emph{Instance:} a set $\mathcal C$ of betweenness constraints over variables $V$ and an integer $\kappa \geq 0$.\nopagebreak

  \emph{Parameter:} The integer $\kappa$.\nopagebreak

  \emph{Question:} Is there a bijection $\alpha:V\rightarrow\{1,\hdots,|V|\}$ that satisfies at least $|\mathcal C|/3+\kappa$ constraints from $\mathcal C$, that is, for  at least $|\mathcal C|/3+\kappa$ constraints $(v_i,\{v_j,v_k\})\in \mathcal C$ we have either $\alpha(v_j) < \alpha(v_i) < \alpha(v_k)$ or $\alpha(v_k) < \alpha(v_i) < \alpha(v_j)$?
\end{quote}

Our main result is that {\sc BATLB} is fixed-parameter tractable.
Moreover, we show that {\sc BATLB} has a \emph{kernel} of quadratic size, namely, any instance is polynomial-time reducible to an equivalent instance of size $O(\kappa^2)$.
(We give a formal definition of a kernel in the next paragraph.)
The kernel is obtained via a nontrivial extension of the recently introduced probabilistic \emph{Strictly Above/Below Expectation Method (SABEM)}~\cite{GutinEtAl2009}, which shows fixed-parameter tractability of {\sc Linear Ordering} and three special cases of {\sc Max Lin-2} parameterized above tight lower bounds.  Alon et al. \cite{AlonEtAl2010} further developed SABEM to prove fixed-parameter tractability of {\sc Max $r$-SAT} parameterized above a tight lower bound, but a simple modification of SABEM in \cite{AlonEtAl2010} is not applicable to {\sc BATLB}, see Section \ref{sec:strictly_above_method}.

We describe SABEM briefly in Section \ref{sec:strictly_above_method} and point out how to extend it in order to obtain a quadratic kernel for {\sc BATLB}.
The necessity to extend SABEM lies with the fact that a feasible solution to {\sc BATLB} is a permutation of variables; see Section \ref{sec:strictly_above_method} for a ``high-level'' discussion and Section \ref{sec:main} for details. Note that our extension of SABEM can be used to determine parameterized complexity of a number of other optimization problems on permutations parameterized above tight bounds, cf. \cite{gutin-2010,GuttmannM06}.

%Ours is the first result establishing fixed-parameter tractability of a problem whose \emph{feasibility} is $\mathsf{NP}$-hard to decide.
%That is, for $\kappa = 2|\mathcal C|/3$ it is $\mathsf{NP}$-hard to decide whether $|\mathcal C|/3+\kappa$ constraints can be satisfied, whereas our result yields a polynomial-time algorithm for every %\emph{fixed} $\kappa$. This contrasts with the polynomial-time feasibility of, e.g., {\sc Max-Sat}, {\sc Max-Cut}, {\sc Vertex Cover} and {\sc Linear Ordering}.

Given a parameterized problem $L$,
a \emph{kernelization of $L$} is a polynomial-time
algorithm that maps an instance $(x,\kappa)$ to an instance $(x',\kappa')$, the
\emph{kernel}, such that (i)~$(x,\kappa)\in L$ if and only if
$(x',\kappa')\in L$, (ii)~ $\kappa'\leq f(\kappa)$, and (iii)~$|x'|\leq g(\kappa)$ for some
functions $f$ and $g$. The function $g(\kappa)$ is called the {\em size} of the kernel.
A parameterized problem is fixed-parameter
tractable if and only if it is decidable and admits a
kernelization~\cite{FlumGrohe2006}; however, the kernels
obtained by this general result have impractically large size.
Therefore, one tries to develop kernelizations that yield problem
kernels of smaller size\footnote{Kernels are of great practical importance when it comes to solving $\mathsf{NP}$-hard problems; they can be interpreted as polynomial-time preprocessing with a quality guarantee~\cite{GuoNiedermeier2007,GutinK2009}.}. A survey of Guo and Niedermeier
\cite{GuoNiedermeier2007} on kernelization lists some problems for which
polynomial size kernels were obtained.
However, polynomial size kernels are known only for some fixed-parameter tractable problems and
Bodlaender et al. \cite{Bodlaender2008} proved that
many fixed-parameter tractable problems do not have polynomial size kernels under reasonable complexity-theoretic assumptions.

\section{Strictly Above/Below Expectation Method}
\label{sec:strictly_above_method}
The Strictly Above/Below Expectation Method \cite{GutinEtAl2009} is a way to prove fixed-parameter tractability of maximization (minimization, respectively) problems $\mathrm{\Pi}$ parameterized above (below, respectively) tight lower (upper, respectively) bounds. In that method, we first apply some reduction rules to reduce the given problem $\mathrm{\Pi}$ to its special case $\mathrm{\Pi}'.$
Then we introduce a random variable $X$ such that the answer to $\mathrm{\Pi}'$  is {\sc Yes} if and only if $X$ takes with positive probability a value greater than or equal to the parameter~$\kappa$.
If $X$ happens to be a symmetric random variable then the simple inequality $\mathbb P(X \geq \sqrt{\mathbb{E}[X^2]}) > 0$ can be useful;
here $\mathbb P(\cdot)$ and $\mathbb E[\cdot]$ denote probability and expectation, respectively.
An application is the {\sc Linear Ordering} problem and a special case of {\sc Max Lin-2} parameterized above tight lower bounds \cite{GutinEtAl2009}.
If $X$ is not symmetric then the following lemma can be used instead.
\begin{lemma}[Alon et al.~\cite{AlonEtAl2010}]
\label{lem:moments}
  Let $X$ be a real random variable and suppose that its first, second and fourth moments satisfy $\mathbb{E}[X] = 0$, $\mathbb{E}[X^2] = \sigma^2 > 0$ and $\mathbb{E}[X^4] \leq c \sigma^4$, respectively, for some constant $c$.
  Then $\mathbb P(X > \frac{\sigma}{2 \sqrt c}) > 0$.
\end{lemma}

\noindent
We combine this result with the following result from harmonic analysis.

\begin{lemma}[Hypercontractive Inequality~\cite{Bonami1970,Gross1975}]
\label{lem:polynomial}
  Let $f = f(x_1,\ldots,x_n)$ be a polynomial of degree $r$ in $n$ variables $x_1,\ldots,x_n$ each with domain $\{-1,1\}$.
  Define a random variable $X$ by choosing a vector $(\epsilon_1,\ldots,\epsilon_n)\in \{-1,1\}^n$ uniformly at random and setting $X = f(\epsilon_1,\ldots,\epsilon_n)$.
  Then $\mathbb E[X^4]\leq 9^r\mathbb E[X^2]^2$.
\end{lemma}

These two lemmas were used in \cite{AlonEtAl2010} and \cite{GutinEtAl2009} to prove fixed-parameter tractability of {\sc Max $r$-SAT} and three special cases of {\sc Max Lin-2} parameterized above tight lower bounds. Unfortunately, it appears to be impossible to introduce a random variable $X$ for which $\mathbb P(X \ge k)>0$ if and only if the answer to {\sc BATLB} is {\sc Yes} and such that $X$ is either symmetric or satisfies the conditions of Lemma \ref{lem:moments}. Thus, in the next section, we introduce $X$ for which we have a weaker property with respect to {\sc BATLB}: if $\mathbb P(X \ge k)>0$ then the answer to {\sc BATLB} is {\sc Yes}. This $X$, however, satisfies the conditions of Lemma \ref{lem:moments}.

To apply Lemma \ref{lem:moments}, we need to evaluate $\mathbb{E}[X^2]$.
While such evaluations by Alon et al. \cite{AlonEtAl2010} and Gutin et al. \cite{GutinEtAl2009} are rather straightforward, our evaluation of $\mathbb{E}[X^2]$ is quite involved
and requires assistance of a computer. Note that we cannot algebraically simplify $X$ as was done by Alon et al. \cite{AlonEtAl2010} in order to simplify computation of $\mathbb{E}[X^2]$. This is due to the weaker property of $X$ with respect to {\sc BATLB}: for some {\sc Yes}-instances of {\sc BATLB} we may have $\mathbb P(X \ge k)=0.$

\section{A quadratic kernel for BATLB}\label{sec:main}

We will now show fixed-parameter tractability of {\sc BATLB}.
In fact, we will prove a stronger statement, i.e., that this problem has a kernel of quadratic size.

For a constraint $C$ of $\mathcal C$ let $vars(C)$ denote the set of variables in $C$. We call a triple $A,B,C$ of distinct betweenness constraints {\em complete} if $vars(A)=vars(B)=vars(C).$

Consider the following {\em reduction rule}: if $\mathcal C$ contains a complete triple of constraints, delete these constraints from $\mathcal C$ and delete from $V$ any variable that appears only in the triple. Since for every linear arrangement exactly one constraint in each complete triple is satisfied we have the following:

\begin{lemma}\label{thm:reduce_3sets}
Let $(V,{\mathcal C},\kappa)$ be an instance of {\sc BATLB} and let $(V',{\mathcal C}',\kappa)$ be obtained from $(V,{\mathcal C},\kappa)$ by applying the reduction rule as long as possible. Then $(V,{\mathcal C},\kappa)$ is a {\sc Yes}-instance of {\sc BATLB} if and only if so is $(V',{\mathcal C}',\kappa)$.
\end{lemma}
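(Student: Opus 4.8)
The plan is to prove the lemma by establishing that a single application of the reduction rule preserves the answer; iterating this finitely many times (each application strictly decreases $|\mathcal C|$) then gives the equivalence between $(V,\mathcal C)$ and the fully reduced $(V',\mathcal C')$. So fix a complete triple $A,B,C$ with $vars(A)=vars(B)=vars(C)=\{x,y,z\}$, and let $\mathcal C^- = \mathcal C \setminus \{A,B,C\}$. Let $W$ be the set of variables of $V$ that occur in some constraint of $\mathcal C^-$, together with the three variables $x,y,z$ if any of them still occurs in $\mathcal C^-$; write $V^- = W$, so that $V^- = V \setminus S$ where $S \subseteq \{x,y,z\}$ is the set of variables appearing \emph{only} in the triple. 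The new instance is $(V^-,\mathcal C^-)$ and I must show it is a \textsc{Yes}-instance with parameter $\kappa$ exactly when $(V,\mathcal C)$ is. The key observation, as the excerpt notes, is that for \emph{any} linear arrangement $\alpha$ of $\{x,y,z\}$ (indeed of any superset), exactly one of the three betweenness constraints on $\{x,y,z\}$ is satisfied, since the three constraints correspond to the three choices of ``middle element'' and a bijection to $\{1,2,3\}$ picks out exactly one middle element. Hence the triple contributes exactly $1$ to the count for every arrangement, and $|\mathcal C^-| = |\mathcal C| - 3$, so the threshold shifts consistently: $\alpha$ satisfies at least $|\mathcal C|/3 + \kappa$ constraints of $\mathcal C$ if and only if it satisfies at least $|\mathcal C^-|/3 + \kappa$ constraints of $\mathcal C^-$.

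For the forward direction, suppose $\alpha:V\to\{1,\dots,|V|\}$ satisfies at least $|\mathcal C|/3+\kappa$ constraints of $\mathcal C$. Restrict $\alpha$ to $V^-$: the relative order of the variables in $V^-$ is unchanged, so we may ``renumber'' to get a bijection $\alpha^-:V^-\to\{1,\dots,|V^-|\}$ inducing the same order on $V^-$ as $\alpha$ does. Since every constraint of $\mathcal C^-$ has all its variables in $V^-$ and is satisfied purely on the basis of the relative order of those three variables, $\alpha^-$ satisfies a constraint of $\mathcal C^-$ iff $\alpha$ does. By the counting observation, $\alpha$ satisfies exactly one of $A,B,C$ and thus satisfies $\bigl(\text{at least } |\mathcal C|/3+\kappa\bigr) - 1 = |\mathcal C^-|/3 + \kappa$ constraints of $\mathcal C^-$, hence so does $\alpha^-$.

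For the reverse direction, suppose $\alpha^-:V^-\to\{1,\dots,|V^-|\}$ satisfies at least $|\mathcal C^-|/3+\kappa$ constraints of $\mathcal C^-$. I need to extend $\alpha^-$ to a bijection $\alpha:V\to\{1,\dots,|V|\}$; the only variables to place are those in $S=\{x,y,z\}\setminus V^-$, i.e. variables appearing only in the deleted triple. Insert each such variable into the linear order in an arbitrary position (say all at the end, in arbitrary order) and renumber to obtain $\alpha$. This $\alpha$ agrees with $\alpha^-$ on the order of $V^-$, so it still satisfies all the $\ge |\mathcal C^-|/3+\kappa$ constraints of $\mathcal C^-$ that $\alpha^-$ did; and by the counting observation it satisfies exactly one of $A,B,C$ as well, giving at least $|\mathcal C^-|/3+\kappa+1 = |\mathcal C|/3+\kappa$ satisfied constraints of $\mathcal C$. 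Thus $(V,\mathcal C)$ is a \textsc{Yes}-instance.

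The argument per application is elementary; the only mild subtlety — and the main thing to get right rather than a genuine obstacle — is bookkeeping about which of $x,y,z$ survive into $V^-$ and checking that the renumbering step never changes whether a betweenness constraint is satisfied (it depends only on relative order, which renumbering preserves). Finally, since each application of the reduction rule removes three constraints, the rule can be applied only $O(|\mathcal C|)$ times before terminating, and composing the one-step equivalences yields that $(V,\mathcal C)$ is a \textsc{Yes}-instance iff the terminal instance $(V',\mathcal C')$ is, which is exactly the claim of Lemma~\ref{thm:reduce_3sets}.
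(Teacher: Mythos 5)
Your proof is correct and follows exactly the paper's (very terse) argument: the paper justifies the lemma solely by the observation that every linear arrangement satisfies exactly one constraint of a complete triple, so removing the triple lowers both the satisfied count and the threshold $|\mathcal C|/3+\kappa$ by exactly one. You have merely written out the restriction/extension bookkeeping that the paper leaves implicit.
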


An instance $(V,{\mathcal C},\kappa)$ of {\sc BATLB} is {\em irreducible} if it does not contain a complete triple. Observe that using Lemma \ref{thm:reduce_3sets} we can transform any instance into an irreducible one in $O(m^3)$ time. 

Consider an instance $(V,\mathcal C,\kappa)$, for a set $V$ of variables and a set $\mathcal C=\{C_1,\ldots ,C_m\}$ of  betweenness constraints, and
a random function $\phi:\ V\rightarrow \{0,1,2,3\}$. (The reason we consider a random function $\phi:\ V\rightarrow \{0,1,2,3\}$ rather than a random function $\phi:\ V\rightarrow \{0,1\}$ is given in the end of this section.) Let $\ell_i(\phi)$ be the number of variables in $V$ mapped by $\phi$ to $i$ for $i=0,1,2,3.$ Now obtain a bijection $\alpha:\ V\rightarrow\{1,\hdots,|V|\}$ by randomly assigning values $1,\ldots ,\ell_0(\phi)$ to all $\alpha(v)$ for which $\phi(v)=0$, and values $\sum_{i=0}^{j-1}\ell_{i}(\phi)+1,\ldots ,\sum_{i=0}^{j}\ell_{i}(\phi)$ to all $\alpha(v)$ for which $\phi(v)=j$ for every $j=1,2,3$. We call such a linear arrangement $\alpha$ a $\phi$-{\em compatible} bijection. It is easy to see that $\alpha$ obtained in this two stage process is, in fact, a random linear arrangement, but this fact is not going to be used here.

Now assume that a function $\phi:\ V\rightarrow\{0,1,2,3\}$ is fixed and consider a constraint $C_p = (v_i,\{v_j,v_k\})\in \mathcal C$. Let $\alpha$ be a random $\phi$-compatible bijection and $\nu_p(\alpha)=1$ if $C_p$ is satisfied and 0, otherwise. Let $w(C_p,\phi)=\mathbb{E}(\nu_p(\alpha))-1/3$ and $w({\mathcal C},\phi)=\sum_{p=1}^m w(C_p,\phi).$

\begin{lemma}\label{lem:yes}
If $w({\mathcal C},\phi)\ge \kappa$ then $(V,\mathcal C)$ is a {\sc Yes}-instance of {\sc BATLB}.
\end{lemma}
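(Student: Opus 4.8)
The plan is to show that the quantity $w(\mathcal C,\phi)$ is exactly the expected *surplus* over the trivial $|\mathcal C|/3$ bound when $\alpha$ is drawn uniformly among $\phi$-compatible bijections, and then to invoke the standard averaging (probabilistic) argument: an expectation that is at least $\kappa$ forces at least one outcome of value at least $\kappa$. Concretely, first I would fix $\phi$ and let $\alpha$ range over the $\phi$-compatible bijections with the uniform distribution described in the construction. For each constraint $C_p$ we have $\mathbb E(\nu_p(\alpha)) = \Pr[\alpha \text{ satisfies } C_p]$, so by definition $w(C_p,\phi) = \mathbb E(\nu_p(\alpha)) - 1/3$. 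Summing over $p$ and using linearity of expectation gives
\[
  w(\mathcal C,\phi) \;=\; \sum_{p=1}^m \bigl(\mathbb E(\nu_p(\alpha)) - \tfrac13\bigr) \;=\; \mathbb E\!\left(\sum_{p=1}^m \nu_p(\alpha)\right) - \frac{m}{3} \;=\; \mathbb E\bigl(\mathrm{sat}(\alpha)\bigr) - \frac{|\mathcal C|}{3},
\]
where $\mathrm{sat}(\alpha) = \sum_p \nu_p(\alpha)$ denotes the number of constraints satisfied by $\alpha$.

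Next, assume $w(\mathcal C,\phi) \ge \kappa$. Then $\mathbb E(\mathrm{sat}(\alpha)) \ge |\mathcal C|/3 + \kappa$. Since $\mathrm{sat}(\alpha)$ is a finite random variable (the set of $\phi$-compatible bijections is finite and nonempty), there must exist at least one $\phi$-compatible bijection $\alpha^\star$ with $\mathrm{sat}(\alpha^\star) \ge \mathbb E(\mathrm{sat}(\alpha)) \ge |\mathcal C|/3 + \kappa$: no random variable can lie strictly below its own mean with probability $1$. This $\alpha^\star$ is in particular a bijection $V \to \{1,\dots,|V|\}$ satisfying at least $|\mathcal C|/3 + \kappa$ constraints, which is exactly the certificate required for $(V,\mathcal C)$ to be a {\sc Yes}-instance of {\sc BATLB}.

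The argument is almost entirely bookkeeping; there is no real obstacle, only two small points to state carefully. The first is that the two-stage random process ($\phi$ followed by a random $\phi$-compatible bijection) does produce, for each fixed $\phi$, an honest probability distribution over a nonempty finite set of bijections, so that "$\mathbb E(\nu_p(\alpha))$" is well defined for a \emph{fixed} $\phi$ — this is immediate from the construction, since the $\ell_i(\phi)$ block sizes sum to $|V|$ and the within-block assignments are arbitrary permutations. The second is the averaging step itself, which only needs that the mean of a real-valued random variable on a finite probability space is attained or exceeded by some atom; this is the familiar engine behind the probabilistic method and requires no symmetry or independence. Note that the converse implication is \emph{not} claimed here (and, as the paper remarks, can fail for some {\sc Yes}-instances), so nothing more is needed.
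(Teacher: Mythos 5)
Your proof is correct and follows essentially the same route as the paper's: linearity of expectation turns the hypothesis $w(\mathcal C,\phi)\ge\kappa$ into $\mathbb E(\sum_p\nu_p(\alpha))\ge m/3+\kappa$, and the averaging argument over the finite set of $\phi$-compatible bijections yields a concrete $\alpha$ satisfying at least $|\mathcal C|/3+\kappa$ constraints. The only difference is that you spell out the bookkeeping (well-definedness of the distribution, finiteness) that the paper leaves implicit.
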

\begin{proof}
By linearity of expectation, $w({\mathcal C},\phi)\ge \kappa$ implies $\mathbb{E}(\sum_{p=1}^m\nu_p(\alpha))\ge m/3 + \kappa$. Thus, if $w({\mathcal C},\phi)\ge \kappa$ then there is a $\phi$-compatible bijection $\alpha$ that satisfies at least $m/3 + \kappa$ constraints. 
\end{proof}

Let $X=w({\mathcal C},\phi)$ and $X_p=w(C_p,\phi),$ $p=1,\ldots ,m.$ Observe that if $\phi$ is a random function from $V$ to $\{0,1,2,3\}$ then $X,X_1,\ldots ,X_m$ are random variables. Recall that $X=\sum_{p=1}^mX_p.$

\begin{lemma}\label{lem:EX=0}
We have $\mathbb{E}[X]=0.$
\end{lemma}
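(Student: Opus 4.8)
The plan is to exploit linearity of expectation and reduce to showing $\mathbb{E}[X_p] = 0$ for each individual constraint $C_p = (v_i,\{v_j,v_k\})$. Since $X = \sum_{p=1}^m X_p$ and $X_p = w(C_p,\phi) = \mathbb{E}(\nu_p(\alpha)) - 1/3$ (where the inner expectation is over the random $\phi$-compatible bijection $\alpha$ with $\phi$ fixed), it suffices to prove that $\mathbb{E}_\phi[\mathbb{E}_\alpha(\nu_p(\alpha))] = 1/3$. But the two-stage process — first draw $\phi: V \to \{0,1,2,3\}$ uniformly at random, then draw a $\phi$-compatible bijection $\alpha$ — produces, as noted in the text just before Lemma~\ref{lem:yes}, a uniformly random linear arrangement $\alpha$ of $V$. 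Hence $\mathbb{E}_\phi[\mathbb{E}_\alpha(\nu_p(\alpha))] = \mathbb{P}(C_p \text{ is satisfied by a uniformly random } \alpha)$.

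The remaining step is the classical observation that a uniformly random bijection $\alpha:V\to\{1,\ldots,|V|\}$ satisfies any fixed betweenness constraint $(v_i,\{v_j,v_k\})$ with probability exactly $1/3$. Indeed, by symmetry each of the three distinct variables $v_i,v_j,v_k$ is equally likely to receive the middle value among $\{\alpha(v_i),\alpha(v_j),\alpha(v_k)\}$, and the constraint is satisfied precisely when $v_i$ is the middle one; formally, restricting a uniformly random permutation to the three positions of $v_i,v_j,v_k$ induces a uniformly random relative order of these three elements, and in exactly two of the six orderings ($\alpha(v_j)<\alpha(v_i)<\alpha(v_k)$ or $\alpha(v_k)<\alpha(v_i)<\alpha(v_j)$) is $C_p$ satisfied. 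Therefore $\mathbb{E}[X_p] = 1/3 - 1/3 = 0$, and summing over $p$ gives $\mathbb{E}[X] = 0$.

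I do not expect any real obstacle here: the only subtlety is making precise that the composed two-stage sampling yields the uniform distribution on linear arrangements, and this is already asserted in the setup (``$\alpha$ obtained in this two stage process is, in fact, a random linear arrangement''), so it may be invoked directly. Alternatively, one can bypass the uniformity claim entirely by computing $\mathbb{E}_\phi[\mathbb{E}_\alpha(\nu_p(\alpha))]$ head-on: condition on the values $\phi(v_i),\phi(v_j),\phi(v_k)$, and in each case check that the probability $C_p$ is satisfied by the $\phi$-compatible bijection, averaged over the distribution of $\phi$ on these three variables, equals $1/3$ — the cases where two or three of the variables land in the same block contribute the right amount by the same within-block symmetry argument, and the cases where they land in distinct blocks are determined outright and balance out. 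Either route is routine; the former is shorter and is the one I would write.
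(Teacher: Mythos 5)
Your proof is correct, but it takes a genuinely different route from the paper's. The paper proves the lemma by computing the full distribution of $X_p$ as a function of $\phi$ restricted to $\{v_i,v_j,v_k\}$ --- five cases with probabilities $1/16, 3/16, 6/16, 2/16, 4/16$ and values $0, -1/3, 1/6, 2/3, -1/3$ --- and then checking that the weighted sum vanishes. You instead invoke the tower property: $\mathbb{E}_\phi[X_p] = \mathbb{E}[\nu_p(\alpha)] - 1/3$ where $\alpha$ is drawn from the composed two-stage process, and since that process yields a uniformly random bijection, the classical symmetry argument gives $\mathbb{E}[\nu_p(\alpha)] = 1/3$. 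This is valid: the uniformity claim, which the paper asserts but explicitly says it will not use, does hold (for any target bijection $\alpha_0$, summing $4^{-n}/(\ell_0!\ell_1!\ell_2!\ell_3!)$ over all block-size compositions gives $1/n!$ by the multinomial theorem), and your within-block symmetry argument correctly handles the conditional uniformity. Your route is shorter and more conceptual; what it does not buy you is the explicit distribution table of $X_p$, which the paper needs anyway in the proof of Lemma~\ref{thm:secondmoment} to compute $\mathbb{E}[X_l^2] = 11/96$ and the pairwise correlations. So the paper's more laborious computation is not wasted effort in context, but as a proof of this particular lemma your argument is a clean and correct alternative. If you write it up, you should include the one-line multinomial verification of the uniformity claim rather than citing the paper's unproved aside.
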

\begin{proof}
Let $C_p = (v_i,\{v_j,v_k\})\in \mathcal C$. Let us first find the distribution of $X_p$. It is easy to check that the probability that $\phi(v_i)=\phi(v_j)=\phi(v_k)$ equals 1/16 and  $X_p=0$ in such a case. The probability that $\phi(v_i)\neq\phi(v_j)=\phi(v_k)$ equals 3/16 and $X_p=-1/3$ in such a case.
The probability that $\phi(v_i)$ equals one of the non-equal $\phi(v_j), \phi(v_k)$ is equal to 6/16 and $X_p=1/6$ in such a case.
Now suppose that $\phi(v_i),\phi(v_j)$ and $\phi(v_k)$ are all distinct. The probability that $\phi(v_i)$ is between $\phi(v_j)$ and $\phi(v_k)$ is 2/16 and $X_p=2/3$ in such a case.
Finally, the probability that $\phi(v_i)$ is not between $\phi(v_j)$ and $\phi(v_k)$ is 4/16 and $X_p=-1/3$ in such a case. Now we can give the distribution of $X_p$ in Table 1.

\2

\begin{table}
\label{tab:Xp_dist}
\centering
\begin{tabular}{|c|c|c|c|}\hline
$|\{\phi(v_i),\phi(v_j), \phi(v_k)\}|$ & Relation & Value of $X_p$ & Prob.\\ \hline
1 & $\phi(v_i) = \phi(v_j) = \phi(v_k)$ & 0 & 1/16 \\
2 & $\phi(v_i)\neq \phi(v_j)=\phi(v_k)$ & $-1/3$ & 3/16\\
2 & $\phi(v_i)\in \{\phi(v_j), \phi(v_k)\}$ & $1/6$ & 6/16 \\
3 & $\phi(v_i)$ is between $\phi(v_j)$ and $\phi(v_k)$ & $2/3$ & 2/16\\
3 & $\phi(v_i)$ is not between $\phi(v_j)$ and $\phi(v_k)$ & $-1/3$ & 4/16\\
\hline
\end{tabular}
\vspace{0.2cm}
\caption{Distribution of $X_p$.}
\end{table}

\2

Using this distribution, it is easy to see that $\mathbb{E}[X_p]=0$ and, thus, $\mathbb{E}[X]=\sum_{p=1}^m\mathbb{E}[X_p]=0.$ \end{proof}

\begin{lemma}\label{lem:poly}
The random variable $X$ can be expressed as a polynomial of degree 6 satisfying the conditions of
Lemma \ref{lem:polynomial}.
\end{lemma}
\begin{proof}
Consider $C_p = (v_i,\{v_j,v_k\})\in \mathcal C$. Let $\epsilon^i_1=-1$ if $\phi(v_i)=0$ or 1 and $\epsilon^i_1=1$, otherwise. Let $\epsilon^i_2=-1$ if $\phi(v_i)=0$ or 2 and $\epsilon^i_2=1$, otherwise. Similarly, we can define $\epsilon^j_1,\epsilon^j_2,\epsilon^k_1,\epsilon^k_2.$ Now $\epsilon^i_1\epsilon^i_2$ can be seen as a binary representation of a number from the set $\{0,1,2,3\}$ and $\epsilon^i_1\epsilon^i_2\epsilon^j_1\epsilon^j_2\epsilon^k_1\epsilon^k_2$ can be viewed as a binary representation of a number from the set $\{0,1,\ldots ,63\}$, where $-1$ plays the role of 0.

We can write $X_p$ as the following polynomial: $$\frac{1}{64}\sum_{q=0}^{63} (-1)^{s_q}w_q\cdot  (\epsilon^i_1+c^{iq}_1)(\epsilon^{i}_2+c^{iq}_2)(\epsilon^{j}_1+c^{jq}_1)(\epsilon^{j}_2+c^{jq}_2)(\epsilon^{k}_1+c^{kq}_1)(\epsilon^{k}_2+c^{kq}_2),$$ where $c^{iq}_1c^{iq}_2c^{jq}_1c^{jq}_2c^{kq}_1c^{kq}_2$ is the binary representation of $q$, $s_q$ is the number of digits equal $-1$ in this representation, and $w_q$ equals the value of $X_p$ for the case when the binary representations of $\phi(v_i),\phi(v_j)$ and $\phi(v_k)$ are $c^{iq}_1c^{iq}_2$, $c^{jq}_1c^{jq}_2$ and $c^{kq}_1c^{kq}_2$, respectively. The actual values for $X_p$ for each case are given in the proof Lemma \ref{lem:EX=0}. The above polynomial is of degree 6. It remains to recall that $X=\sum_{p=1}^m X_p.$
\end{proof}

\begin{lemma}
\label{thm:secondmoment}
For an irreducible instance $(V,{\mathcal C},\kappa)$ of {\sc BATLB} we have $\mathbb{E}[X^2] \geq \frac{11}{768}m$.
\end{lemma}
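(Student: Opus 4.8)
The plan is to evaluate $\mathbb{E}[X^2]$ by Parseval's identity, using that $X$ is multilinear in the $2|V|$ sign variables $\epsilon^v_1,\epsilon^v_2$ ($v\in V$) of Lemma~\ref{lem:poly} — it is multilinear because in every term of the polynomial exhibited there each $\epsilon^v_l$ occurs to the first power. Expanding $X=\sum_S\widehat{X}(S)\prod_{(v,l)\in S}\epsilon^v_l$ in the orthonormal monomial basis indexed by sets $S$ of sign variables gives $\mathbb{E}[X^2]=\sum_S\widehat{X}(S)^2$. Since $X_p$ is a function of $\phi$ restricted to $vars(C_p)$, its Fourier coefficients are supported on monomials in the six sign variables of $vars(C_p)$, so $\widehat{X}(S)=\sum_{p\colon vars(C_p)\supseteq\mathrm{var}(S)}\widehat{X_p}(S)$, where $\mathrm{var}(S)$ denotes the set of variables occurring in $S$. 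Grouping the monomials by $\mathrm{var}(S)$ and dropping every group with $|\mathrm{var}(S)|\le 2$ (each contributes $\sum_{\mathrm{var}(S)=W}\widehat{X}(S)^2\ge 0$), we obtain
\[
  \mathbb{E}[X^2] \;\ge\; \sum_{T\colon |T|=3}\left\| \sum_{p\colon vars(C_p)=T} X_p^{=T} \right\|_2^2 ,
\]
where, for a $3$-set $T$, $X_p^{=T}$ is the component of $X_p$ in the Efron--Stein decomposition over the coordinates $\phi(v)$, $v\in T$, that depends on all three of them, and $\|\cdot\|_2$ is the $L_2$-norm for $(\phi(v))_{v\in T}$ uniform on $\{0,1,2,3\}^3$.

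Next I would evaluate the contribution of a single $3$-set. Up to relabelling its three variables, a betweenness constraint on $T$ is one fixed function of $(\phi(v))_{v\in T}$, so $w_3:=\|X_p^{=T}\|_2^2$ is a universal constant, and by the Efron--Stein decomposition $w_3=\mathbb{E}[X_p^2]$ minus the total $L_2$-weight of $X_p$ on functions of at most two of the coordinates. The table in the proof of Lemma~\ref{lem:EX=0} gives $\mathbb{E}[X_p^2]=\tfrac{11}{96}$, while the subtracted weight is a sum of a handful of explicit one- and two-coordinate pieces and equals $\tfrac{11}{128}$; hence $w_3=\tfrac{11}{96}-\tfrac{11}{128}=\tfrac{11}{384}$. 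The decisive observation is that the contribution of $T$ to the bound above equals $w_3$ whether $T$ carries one constraint or two: if $C_a,C_b,C_c$ are the three possible constraints on $T$, then exactly one of them is satisfied by any arrangement, so $X_a+X_b+X_c\equiv 0$ and thus $X_a^{=T}+X_b^{=T}+X_c^{=T}=0$; consequently $\|X_a^{=T}+X_b^{=T}\|_2^2=\|X_c^{=T}\|_2^2=w_3$, while all three together contribute $0$.

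The role of irreducibility is decisive here: it forbids complete triples, so every $3$-set that supports a constraint supports one or two of them and contributes exactly $w_3$, while the $3$-sets supporting nothing contribute nothing. Therefore $\mathbb{E}[X^2]\ge w_3\cdot\#\{T\colon T\text{ supports a constraint}\}$, and since each such $T$ supports at most two constraints, $\#\{T\colon T\text{ supports a constraint}\}\ge m/2$; combining, $\mathbb{E}[X^2]\ge\tfrac12 w_3\,m=\tfrac{11}{768}m$.

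The main obstacle is purely computational: pinning down $w_3=\tfrac{11}{384}$, i.e.\ the one- and two-coordinate weight $\tfrac{11}{128}$ of $X_p$. It unwinds to a sum, over the $4^3=64$ value patterns $(\phi(v))_{v\in T}$, of the square of an inclusion--exclusion correction term built from the five cases in the proof of Lemma~\ref{lem:EX=0}, and carrying out this bookkeeping reliably is what warrants the computer assistance the authors mention. Everything else — Parseval, dropping the low-order groups, the cancellation $X_a+X_b+X_c\equiv 0$, and the count $\#\{T\}\ge m/2$ — is routine.
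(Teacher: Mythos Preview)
Your argument is correct and genuinely different from the paper's. The paper expands $\mathbb{E}[X^2]=\sum_l\mathbb{E}[X_l^2]+\sum_{l\ne l'}\mathbb{E}[X_lX_{l'}]$, classifies the ordered pairs $(l,l')$ into eight overlap types $S_1,\dots,S_8$, computes (by machine) the value of $\mathbb{E}[X_lX_{l'}]$ for each type, rewrites the cross sum in terms of the degree statistics $b(u),e(u),c^u_v,c_{uv}$, and then completes two squares to obtain $\sum_{l\ne l'}\mathbb{E}[X_lX_{l'}]\ge -\tfrac{77}{768}m$; combined with $\mathbb{E}[X_l^2]=\tfrac{88}{768}$ this yields $\tfrac{11}{768}m$. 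Your route bypasses the eight-way case analysis and the square-completion entirely: by Parseval/Efron--Stein you discard the (nonnegative) level-$\le 2$ mass and reduce to the per-triple quantity $\bigl\|\sum_{vars(C_p)=T}X_p^{=T}\bigr\|_2^2$, and the identity $X_a+X_b+X_c\equiv 0$ forces this to equal the single constant $w_3=\tfrac{11}{384}$ whenever $T$ carries one or two constraints; together with $\#\{T\}\ge m/2$ this gives the same $\tfrac{11}{768}m$.

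What each approach buys: the paper's derivation keeps the lower-level Fourier mass in play and in principle could yield a sharper bound on richer instances (the completed squares are usually positive), at the cost of a heavier, computer-assisted case analysis. Your argument is shorter and more conceptual, isolates exactly where irreducibility enters (a complete triple would make its $T$ contribute $0$), and explains why the constant is a clean multiple of $\mathbb{E}[X_p^2]=\tfrac{11}{96}$. The one computation you still need, $w_3=\tfrac{11}{384}$, is real but moderate: the level-$1$ mass is $\|f_1\|^2+2\|g_1\|^2=\tfrac{1}{64}+\tfrac{2}{256}=\tfrac{3}{128}$ (with $f_1,g_1$ taking values $\pm\tfrac18$ and $\mp\tfrac1{16}$ as in Table~1), and the level-$2$ mass works out to $\tfrac1{16}$, giving $\tfrac{11}{96}-\tfrac{11}{128}=\tfrac{11}{384}$ as you state. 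These numbers are consistent with the paper's entries $w_1=12$, $w_2=3$, $w_4=24$, $w_5=36$ in Table~2, which encode exactly the same level-$1$ and level-$2$ norms.
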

\begin{proof}
First, observe that $\mathbb{E}[X^2] = \sum_{l=1}^m \mathbb{E}[X_l^2] + \sum_{1\leq l \not= l' \leq m}\mathbb{E}[X_lX_{l'}]$.
We will compute $\mathbb{E}[X_l^2]$ and $\mathbb{E}[X_lX_{l'}]$ separately.

Using the distribution of $X_l$ given in Table 1, it is easy to see that $\mathbb{E}[X_l^2] = 11/96=88/768$. It remains to show that

\begin{equation}\label{eq1}
\sum_{1\leq l\neq l' \leq m}\mathbb{E}[X_lX_{l'}] \geq - \frac{77}{768}m.
\end{equation}
Indeed, (\ref{eq1}) and $\mathbb{E}[X_l^2] = 88/768$ imply that
\begin{equation*}
\mathbb{E}[X^2] =    \sum_{l=1}^m \mathbb{E}[X_l^2] + \sum_{1\leq l\neq l' \leq m}\mathbb{E}[X_lX_{l'}]
                        \geq \frac{88}{768}m - \frac{77}{768}m
                        =    \frac{11}{768}m,
\end{equation*}

In the remainder of this proof we show that (\ref{eq1}) holds.
Let $C_l, C_{l'}$ be a pair of distinct constraints of $\mathcal{C}$.
To evaluate $\mathbb{E}[X_lX_{l'}]$, we consider several cases.
A simple case is when the sets $vars(C_l)$ and $vars(C_{l'})$ are disjoint: then $X_l$ and $X_{l'}$ are independent random variables and, thus, $\mathbb{E}[X_lX_{l'}] = \mathbb{E}[X_l]\mathbb{E}[X_{l'}] = 0$.
Let $U = \{(l,l')~|~C_l,C_{l'}\in\mathcal C, l \not= l'\}$ be the set of all ordered index pairs corresponding to distinct constraints in $\mathcal C$.
We will classify subcases of this case by considering some subsets of $U.$ Let
\begin{eqnarray*}
S_1(u) & = & \{(l,l')\in U\  :\ C_l = (u,\{a,b\}), C_{l'} = (u,\{c,d\}),\ a,b,c,d\in V\},\\
S_2(u) & = & \{(l,l')\in U\  :\ C_l = (a,\{u,b\}), C_{l'} = (c,\{u,d\}),\ a,b,c,d\in V\},\\
S_3(u) & = & \{(l,l'),(l',l)\in U\  :\ C_l = (u,\{a,b\}), C_{l'} = (c,\{u,d\}),\ a,b,c,d\in V\},\\
S_4(u,v) & = & \{(l,l')\in U\  :\ C_l = (u,\{v,a\}), C_{l'} = (u,\{v,b\}),\ a,b\in V\}\\
          & \cup & \{(l,l')\in U\  :\ C_l = (v,\{u,a\}), C_{l'} = (v,\{u,b\}),\ a,b\in V\},\\
S_5(u,v) & = & \{(l,l')\in U \  :\ C_l = (a,\{u,v\}), C_{l'} = (b,\{u,v\}),\ a,b\in V\},\\
S_6(u,v) & = & \{(l,l'),(l',l) \in U\  :\ C_l = (u,\{v,a\}), C_{l'} = (b,\{u,v\}),\ a,b\in V\}\\
          & \cup & \{(l,l'),(l',l)\in U\  :\ C_l = (v,\{u,a\}), C_{l'} = (b,\{u,v\}),\ a,b\in V\},\\
S_7(u,v) & = & \{(l,l'),(l',l) \in U\  :\ C_l = (u,\{v,a\}), C_{l'} = (v,\{u,b\}),\ a,b\in V\}\\
%          & \cup & \{(l,l')\in U~|~C_l = (v,\{u,b\}), C_{l'} = (u,\{v,a\}),\ a,b\in V\},\\
S_8(u,v,w) & = & \{(l,l')\in U\  :\ vars(C_l)=vars(C_{l'})=\{u,v,w\}\}.
\end{eqnarray*}

Let $u,v\in V$ be a pair of distinct variables. Observe that
$S_4(u,v)=(S_1(u)\cap S_2(v))\cup (S_1(v)\cap S_2(u)),$
$S_5(u,v)=S_2(u)\cap S_2(v),$
$S_6(u,v)=(S_3(u)\cap S_2(v))\cup (S_3(v)\cap S_2(u))$ and
$S_7(u,v)=S_3(u)\cap S_3(v)$.
Let $u,v,w\in V$ be a triple of distinct variables. Observe that
\begin{equation}\label{eq2}
S_8(u,v,w)= (S_3(u)\cap S_3(v)\cap S_2(w))\cup (S_3(v)\cap S_3(w)\cap S_2(u)) \cup (S_3(w)\cap S_3(u)\cap S_2(v)).
\end{equation}

For a variable $u\in V$, let $b(u)=|\{l\  :\ C_l = (u,\{a,b\}), a,b\in V\}|$ and $e(u)=|\{l\  :\ C_l = (a,\{u,b\}), a,b\in V\}|.$ Observe that $|S_1(u)|=b(u)(b(u)-1)$, $|S_2(u)|=e(u)(e(u)-1)$ and
$|S_3(u)|=2b(u)e(u)$.

For a pair $u,v\in V$, let $c^u_v=|\{l\  :\ C_l = (u,\{v,a\}), a\in V\}|$ and $c_{uv}=|\{l\  :\ C_l = (a,\{u,v\}), a\in V\}|.$ Observe that
$|S_4(u,v)|=c^u_v(c^u_v-1) + c^v_u(c^v_u-1),$
$|S_5(u,v)|=c_{uv}(c_{uv}-1),$
$|S_6(u,v)|= 2(c^u_v + c^v_u)\cdot c_{uv}$ and
$|S_7(u,v)|=2c^u_vc^v_u.$
Let $u,v,w\in V$ be a triple of distinct variables. Since $\mathcal C$ is irreducible, the number of ordered pairs $(C_l,C_{l'})$ for which $vars(C_l) = vars(C_{l'})=\{u,v,w\}$ is at most $2$, i.e., $|S_8(u,v,w)|\le 2$.

\begin{table}
\label{tab:set_types}
\centering
\renewcommand\arraystretch{1.4}
\begin{tabular}{|cccrr|}
\hline
 Set   & Union/intersection Form                           & $|$Set$|$                         & \mbox{  }$768\mathbb{E}[X_lX_{l'}]$ & $\mbox{  }768w'$ \\
\hline
     $S_1(u)$   &     --                    & $b(u)(b(u)-1)$                    & $12 = w_1$  &  $12$  \\
   $S_2(u)$     &     --                   & $e(u)(e(u)-1)$                    & $3 = w_2$  &  $3$  \\
$S_3(u)$     &         --                & $b(u)e(u) + e(u)b(u)$             & $-6 = w_3$ & $-6$  \\
$S_4(u,v)$ &   $(S_1(u)\cap S_2(v))\cup (S_1(v)\cap S_2(u))$                                      &   $c^u_v(c^u_v-1) + c^v_u(c^v_u-1)$ & $24 = w_4$  &  $9$ \\
$S_5(u,v)$  & $S_2(u)\cap S_2(v)$                   & $c_{uv}(c_{uv}-1)$                & $36 = w_5$  &  $30$ \\
$S_6(u,v)$ &   $(S_3(u)\cap S_2(v))\cup (S_3(v)\cap S_2(u))$     & $2(c^u_v + c^v_u)\cdot c_{uv}$    & $-18 = w_6$  & $-15$ \\
$S_7(u,v)$ &  $S_3(u)\cap S_3(v)$                 & $2c^u_vc^v_u$                     & $-6 = w_7$  & $6$ \\
$S_8(u,v,w)$  & see (\ref{eq2})                             & $\leq 2$                               & $-44 = w_8$ & $-11$ \\
\hline
\end{tabular}\\
\vspace{0.2cm}
\caption{Data for sets $S_i(\dot)$, $i=1,2,\ldots ,8$.}
\end{table}

We list the sets $S_i(\cdot )$, their union/intersection forms (for $i=4,5,6,7$) and their sizes in Table~2.
If $(l,l')$ belongs to some $S_i$ but to no $S_j$ for $j>i$, then Table~2 also contains the value $768\cdot \mathbb{E}[X_lX_{l'}]$,
in the row corresponding to $S_i$. These values cannot be easily calculated analytically as there are many cases to consider and we have calculated them using a computer.
We will briefly describe how our program computes $\mathbb{E}[X_lX_{l'}]$  using as an example the case $(l,l')\in S_1(u)$, i.e., $ C_l = (u,\{a,b\}), C_{l'} = (u,\{c,d\}).$ For each $(q_1,q_2,q_3,q_4,q_5)\in \{0,1,2,3\}^5$ the probability of $(u,a,b,c,d)=(q_1,q_2,q_3,q_4,q_5)$ is $4^{-5}$ and the corresponding value of $X_lX_{l'}$ can be found in Table 2.

We are now ready to compute a lower bound on the term $\sum_{1\leq l\neq l' \leq m}\mathbb{E}[X_lX_{l'}]$.
Define the values $w_i'$ for $i=1,2,\ldots,8$ as it is done in Table~2. We will now show that the following holds
(note that the sets we sum over have to contain distinct elements).

\2

$\begin{array}{rcl}
\vspace{0.15cm}
 \sum_{1\leq l\neq l' \leq m}\mathbb{E}[X_lX_{l'}] & = & \sum_{u \in V} \sum_{i=1}^3 |S_i(u)| w_i'
 + \sum_{\{u,v\} \subseteq  V  }  \sum_{i=4}^7   |S_i(u,v)| w_i'  \\
 & & + \sum_{\{u,v,w\} \subseteq  V  } |S_8(u,v)| w_8'  \\
\end{array}$

\2

%To this end, we evaluate the contribution of each element $(l,l')\in U$ to this term.
%For the sake of convenience, we will interpret the value $\mathbb{E}[X_lX_{l'}]$ as the weight of element $(l,l')$.

In order to show the above we consider the possible cases for $(l,l')\in U$.

\vspace{3mm}

\noindent
\textbf{Case 1:} $|vars(C_l)\cap vars(C_{l'})|= 0$.
In this case $\mathbb{E}[X_lX_{l'}] =0$ and the corresponding $(l,l')$ does not belong to any $S_i$ and therefore
contributes zero to the right-hand side above.

\vspace{3mm}

\noindent
\textbf{Case 2:} $|vars(C_l)\cap vars(C_{l'})|= 1$.
Each pair $(l,l') \in S_1(u)$ contributes $\frac{12}{768}$ to both sides of the above equation, as in this case
$(l,l')$ does not belong to any $S_j$ with $j>1$. Analogously if $(l,l')\in S_2(u)$ then it contributes  $\frac{3}{768}$
to both sides of the above equation. Furthermore if $(l,l')\in S_3(u)$ then it contributes $-\frac{6}{768}$.

\vspace{3mm}

\noindent
\textbf{Case 3:} $|vars(C_l)\cap vars(C_{l'})|= 2$.
Consider a pair $(l,l')\in S_4(u,v)$ and assume, without loss of generality, that $(l,l')\in S_1(u)\cap S_2(v)$.
Note that $(l,l')$ contributes $\frac{24}{768}$ to the left-hand side of the equation and it contributes $w_1' + w_2' + w_4' =
\frac{24}{768}$ to the right-hand side (as $(l,l')\in S_1(u)\cap S_2(v) \cap S_4(u,v)$).
Analogously if $(l,l')\in S_5(u,v)$ we get a contribution of $w_5 = \frac{36}{768} = w_2' + w_2' + w_5'$ to both
sides of the equation. If $(l,l')\in S_6(u,v)$ we get a contribution of $w_6 = -\frac{18}{768} = w_3' + w_2' + w_6'$ to both
sides of the equation. If $(l,l')\in S_7(u,v)$ we get a contribution of $w_7 = -\frac{6}{768} = w_3' + w_3' + w_7'$ to both
sides of the equation.

\vspace{3mm}

\noindent
\textbf{Case 4:} $|vars(C_l)\cap vars(C_{l'})|= 3$. Assume, without loss of generality, that $(l,l')\in S_3(u)\cap S_3(v)\cap S_2(w)$
and note that $(l,l') \in S_7(u,v) \cap S_6(u,w) \cap S_6(v,w)$. Therefore
we get a contribution of $w_8 = -\frac{44}{768} = w_3' + w_3' + w_2' + w'_7 + w'_6 + w'_6 + w_8'$ to both
sides of the equation.

\vspace{3mm}

\noindent
Therefore the above equation holds, which implies the following:

\2

\newcommand{\InsX}{\hspace{3.75cm} }
\newcommand{\InsY}{\hspace{2.4cm} }
\newcommand{\InsXX}{\hspace{2.9cm} }
$\begin{array}{l}
\vspace{0.2cm}
\sum_{1\leq l\neq l' \leq m}\mathbb{E}[X_lX_{l'}]  =   \sum_{u\in V}  \left(  |S_1(u)| w'_1 + |S_2(u)| w'_2 + |S_3(u)| w'_3 \right) \\
\vspace{0.7cm}
\InsX{}     + \sum_{\{u,v\} \subseteq V} \sum_{i=4}^7 |S_i(u,v)| w'_i        + \sum_{\{u,v,w\} \subseteq V } |S_8(u,v,w)| w'_8 \\
\vspace{0.2cm}
\InsY{}  =  \frac{1}{2\cdot 768}\sum_{u\in V}    \left( 6(2b(u)-e(u))^2 - 24b(u)-6e(u)\right) \\
\vspace{0.2cm}
\InsXX{}  + \frac{1}{2\cdot 768}\sum_{\{u,v\} \subseteq V}  \left( 15(c^u_v + c^v_u - 2c_{uv})^2 + 12\left(\frac{c^u_v - c^v_u}{2}\right)^2 - 18(c^u_v + c^v_u) - 60c_{uv}\right) \\
\InsXX{} +              \sum_{\{u,v,w\} \subseteq V} |S_8(u,v,w)| w'_8 \\
\end{array}$

\2
\2

To complete the proof of the lemma it remains to translate this sum into a function on the number of constraints.
In that respect, notice that $\sum_{u\in V}b(u) = m$ and $\sum_{u\in V}e(u) = 2m$.
Further, each clause $(u,\{v,w\})$ contributes exactly one unit to each of $c^u_v$ and $c^u_w$, as well as exactly one unit to $c_{vw}$.
Hence $\sum_{\{u,v\}\subset V}(c^u_v + c^v_u)= 2m$ and $\sum_{\{u,v\}\subset V}c_{uv} = m$. Since $\mathcal C$ is irreducible, the number of ordered pairs $(C_l,C_{l'})$ for which $vars(C_l) = vars(C_{l'})$ is at most $m/2$ and, thus, $$\sum_{\{u,v,w\} \subseteq V}  |S_8(u,v,w)| w'_8\le m\cdot w'_8.$$

Together these bounds imply that
\begin{equation*}
\sum_{1\leq l\neq l' \leq m}\mathbb{E}[X_lX_{l'}] \geq - \frac{36}{2\cdot 768}m - \frac{96}{2\cdot 768}m - \frac{11}{768}m
                                                                  =    - \frac{77}{768}m \enspace
\end{equation*}
and (\ref{eq1}) holds.
\end{proof}

We are now ready to prove the main result.
\begin{theorem}\label{thmain}
{\sc BATLB} has a kernel of size $O(\kappa^2).$
\end{theorem}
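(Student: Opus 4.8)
The plan is to run the Strictly Above/Below Expectation argument on top of the second moment bound already established in Lemma~\ref{thm:secondmoment}, turning that bound into a kernel via a standard win/win dichotomy. First I would fix the kernelization algorithm. Given an instance $(V,\mathcal C,\kappa)$ with $m=|\mathcal C|$: if $\kappa=0$, output a fixed constant-size {\sc Yes}-instance, since a uniformly random arrangement satisfies $m/3$ constraints in expectation and hence the input is a {\sc Yes}-instance. Otherwise, exhaustively apply the reduction rule of Lemma~\ref{thm:reduce_3sets} in time $O(m^3)$ to obtain an equivalent irreducible instance, and in addition delete every variable occurring in no constraint; the latter operation changes neither the answer nor $\kappa$ and guarantees $|V|\le 3|\mathcal C|$. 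Rename the result $(V,\mathcal C,\kappa)$ and assume $m=|\mathcal C|\ge 1$.

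Next I would bring in the probabilistic tools. Let $\phi:V\to\{0,1,2,3\}$ be the random function of this section and $X=w(\mathcal C,\phi)$. By Lemma~\ref{lem:poly}, $X$ is a polynomial of degree $r=6$ in independent uniform $\pm1$ variables, so Lemma~\ref{thm:polynomial} yields $\mathbb{E}[X^4]\le 2^{36}(\mathbb{E}[X^2])^2$; together with $\mathbb{E}[X]=0$ (Lemma~\ref{lem:EX=0}) and $\sigma^2:=\mathbb{E}[X^2]\ge\frac{11}{768}m>0$ (Lemma~\ref{thm:secondmoment}), the hypotheses of Lemma~\ref{thm:moments} hold with $b=2^{36}$, so that
\[
\mathbb{P}\!\left(X>\frac{\sigma}{4\sqrt{b}}\right)>0 .
\]
Put $c=768\cdot 2^{40}/11$. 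If $m\ge c\kappa^2$, then $\sigma/(4\sqrt b)\ge\tfrac{1}{4\cdot 2^{18}}\sqrt{11m/768}\ge\kappa$, so with positive probability the random $\phi$ satisfies $w(\mathcal C,\phi)\ge\kappa$; fixing such a $\phi$ and applying Lemma~\ref{lem:yes} shows the instance is a {\sc Yes}-instance, and in this case the algorithm outputs a fixed constant-size {\sc Yes}-instance. If instead $m<c\kappa^2$, the algorithm outputs $(V,\mathcal C,\kappa)$ itself, which has $|\mathcal C|=O(\kappa^2)$ and $|V|\le 3|\mathcal C|=O(\kappa^2)$. Conditions (i)--(iii) of a kernelization are then immediate: (iii) is the size bound just derived, (ii) holds because the parameter is never increased, and (i) holds by Lemma~\ref{thm:reduce_3sets} in the small-$m$ branch and by the explicit {\sc Yes}-instance arguments in the other two branches. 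This establishes the theorem.

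The genuinely hard step is Lemma~\ref{thm:secondmoment}: one must enumerate all the ways in which two distinct betweenness constraints can share variables (this is the role of the sets $S_i$), evaluate the resulting pairwise correlations $\mathbb{E}[X_lX_{l'}]$ --- which requires computer assistance, since $X$ cannot be algebraically simplified here --- and use irreducibility to cap the number of ordered pairs $(l,l')$ with $vars(C_l)=vars(C_{l'})$ at $m/2$ so that the negative $S_8$-contribution stays under control. Once that bound is available, everything else --- $\mathbb{E}[X]=0$, the degree-$6$ polynomial representation of $X$ needed to invoke Bourgain's inequality, and the final substitution of the constants $b$ and $\sigma^2$ into Lemma~\ref{thm:moments} --- is routine bookkeeping.
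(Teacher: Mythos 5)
Your proposal is correct and follows essentially the same route as the paper: reduce to an irreducible instance, combine Lemmas~\ref{lem:EX=0}, \ref{lem:poly}, \ref{thm:polynomial}, \ref{thm:secondmoment} and \ref{thm:moments} to get $\mathbb P\bigl(X>\tfrac{1}{4\cdot 2^{18}}\sqrt{11m/768}\bigr)>0$, and then split on whether this threshold is at least $\kappa$ (Yes-instance via Lemma~\ref{lem:yes}) or $m=O(\kappa^2)$. The only differences are cosmetic bits of extra care (the $\kappa=0$ and $m=0$ edge cases, deleting isolated variables to bound $|V|$, and making the constant $c$ explicit) that the paper leaves implicit.
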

\begin{proof}
Let $(V,\mathcal C)$ be an instance of {\sc BATLB}. By Lemma \ref{thm:reduce_3sets}, in time $O(m^3)$ we can obtain an irreducible instance $(V',\mathcal C')$ such that $(V,\mathcal C)$ is a {\sc Yes}-instance if and only if $(V',{\mathcal C}')$ is a {\sc Yes}-instance. Let $m'=|{\mathcal C}'|$ and
let  $X$ be the random variable defined above. Then $X$ is expressible as a polynomial of degree 6 by Lemma \ref{lem:poly}; hence it follows from Lemma~\ref{lem:polynomial}  that $\mathbb{E}[X^4] \leq 9^{6} \mathbb{E}[X^2]^2$. Consequently, $X$ satisfies the conditions of Lemma~\ref{lem:moments}, from which we conclude in combination with Lemma~\ref{thm:secondmoment} that $\mathbb P\left(X > \frac{1}{2\cdot 9^{3}} \sqrt{\frac{11}{768}m'}\right) > 0$.
By Lemma \ref{lem:yes} if $\frac{1}{2\cdot 9^{3}} \sqrt{\frac{11}{768}m'}\ge \kappa$ then $(V',{\mathcal C}')$ is a {\sc Yes}-instance for {\sc BATLB}.
Otherwise, we have $m'= O(\kappa^2)$.
This concludes the proof of the theorem.
\end{proof}

We complete this section by answering the following natural question: why have we considered functions $\phi:\ V\rightarrow\{0,1,2,3\}$ rather than functions $\phi:\ V\rightarrow\{0,1\}$? The latter would involve less computations and give a smaller degree of the polynomial representing $X$. The reason is that our proof of Lemma \ref{thm:secondmoment} would not work for functions $\phi:\ V\rightarrow\{0,1\}$ (we would only be able to prove that $\mathbb{E}[X^2] \geq \sum_{\{u,v\}\subset V}[c^v_u+c^u_v-2c_{uv}]^2$, which is not enough).

%\newpage

\end{document}